\begin{document}

\theoremstyle{plain}
\newtheorem{theorem}{Theorem}[section]
\newtheorem{lemma}[theorem]{Lemma}
\newtheorem{corollary}[theorem]{Corollary}
\newtheorem{proposition}[theorem]{Proposition}
\newtheorem{question}[theorem]{Question}
\theoremstyle{definition}
\newtheorem{notations}[theorem]{Notations}
\newtheorem{notation}[theorem]{Notation}
\newtheorem{remark}[theorem]{Remark}
\newtheorem{remarks}[theorem]{Remarks}
\newtheorem{definition}[theorem]{Definition}
\newtheorem{claim}[theorem]{Claim}
\newtheorem{assumption}[theorem]{Assumption}
\numberwithin{equation}{section}
\newtheorem{example}[theorem]{Example}
\newtheorem{examples}[theorem]{Examples}
\newtheorem{propositionrm}[theorem]{Proposition}

\newcommand{\binomial}[2]{\left(\begin{array}{c}#1\\#2\end{array}\right)}
\newcommand{\zar}{{\rm zar}}
\newcommand{\an}{{\rm an}}
\newcommand{\red}{{\rm red}}
\newcommand{\codim}{{\rm codim}}
\newcommand{\rank}{{\rm rank}}
\newcommand{\Pic}{{\rm Pic}}
\newcommand{\Div}{{\rm Div}}
\newcommand{\Hom}{{\rm Hom}}
\newcommand{\im}{{\rm im}}
\newcommand{\Spec}{{\rm Spec}}
\newcommand{\sing}{{\rm sing}}
\newcommand{\reg}{{\rm reg}}
\newcommand{\Char}{{\rm char}}
\newcommand{\Tr}{{\rm Tr}}
\newcommand{\res}{{\rm res}}
\newcommand{\tr}{{\rm tr}}
\newcommand{\supp}{{\rm supp}}
\newcommand{\Gal}{{\rm Gal}}
\newcommand{\Min}{{\rm Min \ }}
\newcommand{\Max}{{\rm Max \ }}
\newcommand{\Span}{{\rm Span  }}

\newcommand{\Frob}{{\rm Frob}}
\newcommand{\lcm}{{\rm lcm}}


\long\def\symbolfootnote[#1]#2{\begingroup%
\def\thefootnote{\fnsymbol{footnote}}\footnote[#1]{#2}\endgroup}

\newcommand{\soplus}[1]{\stackrel{#1}{\oplus}}
\newcommand{\dlog}{{\rm dlog}\,}    
\newcommand{\limdir}[1]{{\displaystyle{\mathop{\rm
lim}_{\buildrel\longrightarrow\over{#1}}}}\,}
\newcommand{\liminv}[1]{{\displaystyle{\mathop{\rm
lim}_{\buildrel\longleftarrow\over{#1}}}}\,}
\newcommand{\boxtensor}{{\Box\kern-9.03pt\raise1.42pt\hbox{$\times$}}}
\newcommand{\sext}{\mbox{${\mathcal E}xt\,$}}
\newcommand{\shom}{\mbox{${\mathcal H}om\,$}}
\newcommand{\coker}{{\rm coker}\,}
\renewcommand{\iff}{\mbox{ $\Longleftrightarrow$ }}
\newcommand{\onto}{\mbox{$\,\>>>\hspace{-.5cm}\to\hspace{.15cm}$}}

\newenvironment{pf}{\noindent\textbf{Proof.}\quad}{\hfill{$\Box$}}

\newcommand{\sA}{{\mathcal A}}
\newcommand{\sB}{{\mathcal B}}
\newcommand{\sC}{{\mathcal C}}
\newcommand{\sD}{{\mathcal D}}
\newcommand{\sE}{{\mathcal E}}
\newcommand{\sF}{{\mathcal F}}
\newcommand{\sG}{{\mathcal G}}
\newcommand{\sH}{{\mathcal H}}
\newcommand{\sI}{{\mathcal I}}
\newcommand{\sJ}{{\mathcal J}}
\newcommand{\sK}{{\mathcal K}}
\newcommand{\sL}{{\mathcal L}}
\newcommand{\sM}{{\mathcal M}}
\newcommand{\sN}{{\mathcal N}}
\newcommand{\sO}{{\mathcal O}}
\newcommand{\sP}{{\mathcal P}}
\newcommand{\sQ}{{\mathcal Q}}
\newcommand{\sR}{{\mathcal R}}
\newcommand{\sS}{{\mathcal S}}
\newcommand{\sT}{{\mathcal T}}
\newcommand{\sU}{{\mathcal U}}
\newcommand{\sV}{{\mathcal V}}
\newcommand{\sW}{{\mathcal W}}
\newcommand{\sX}{{\mathcal X}}
\newcommand{\sY}{{\mathcal Y}}
\newcommand{\sZ}{{\mathcal Z}}

\newcommand{\A}{{\mathbb A}}
\newcommand{\B}{{\mathbb B}}
\newcommand{\C}{{\mathbb C}}
\newcommand{\D}{{\mathbb D}}
\newcommand{\E}{{\mathbb E}}
\newcommand{\F}{{\mathbb F}}
\newcommand{\G}{{\mathbb G}}
\newcommand{\HH}{{\mathbb H}}
\newcommand{\I}{{\mathbb I}}
\newcommand{\J}{{\mathbb J}}
\newcommand{\M}{{\mathbb M}}
\newcommand{\N}{{\mathbb N}}
\renewcommand{\P}{{\mathbb P}}
\newcommand{\Q}{{\mathbb Q}}
\newcommand{\T}{{\mathbb T}}
\newcommand{\U}{{\mathbb U}}
\newcommand{\V}{{\mathbb V}}
\newcommand{\W}{{\mathbb W}}
\newcommand{\X}{{\mathbb X}}
\newcommand{\Y}{{\mathbb Y}}
\newcommand{\Z}{{\mathbb Z}}


\newcommand{\Fqm}{\mathbb{F}_{q^m}}
\newcommand{\Fq}{\mathbb{F}_q}
\newcommand{\Fp}{\mathbb{F}_p}
\newcommand{\Fpl}{\mathbb{F}_{p^l}}
\newcommand{\fqn}{\mathbb{F}_q^n}
\newcommand{\be}{\begin{eqnarray}}
\newcommand{\ee}{\end{eqnarray}}
\newcommand{\nn}{{\nonumber}}
\newcommand{\dd}{\displaystyle}
\newcommand{\ra}{\rightarrow}
\newcommand{\bigmid}[1][12]{\mathrel{\left| \rule{0pt}{#1pt}\right.}}
\newcommand{\cl}{${\rm \ell}$}
\newcommand{\clp}{${\rm \ell^\prime}$}

\title[Self-dual codes]{Self-Dual Codes better than the Gilbert--Varshamov bound}
\author{Alp Bassa \and Henning Stichtenoth}
\thanks{A.B. was supported by the BAGEP Award of the Science Academy with funding supplied by Mehve{\c s} Demiren in memory of Selim Demiren and T\"UB\.ITAK Proj. 112T233. H.S. was supported by T\"UB\.ITAK Proj. 114F432.}
\maketitle

\begin{abstract} We show that every self-orthogonal code over $\mathbb F_q$ of length $n$ can be extended to a self-dual code, if there exists self-dual codes of length $n$. Using a family of Galois towers of algebraic function fields we show that over any nonprime field $\mathbb F_q$, with $q\geq 64$, except possibly $q=125$, there are self-dual codes better than the asymptotic Gilbert--Varshamov bound.
\end{abstract}

\section{Introduction}
\label{sec1}

Let $\Fq$ be the finite field of cardinality $q$ where $q$ is a power of some prime
number $p$. We mean by a code $C$ over $\Fq$ always a  linear code; i.e.,
$C$ is a linear subspace of the $n$-dimensional vector space $\fqn$. The number $n$ is 
called the length of $C$, and the dimension $k$ of $C$ as an $\Fq$-vector space is called the dimension of $C$. The weight of an element $x=(x_1,\ldots,x_n)\in \fqn$ is defined as 
$${\rm wt}(x)=|\{i:x_i\neq 0\}|.$$
The minimum distance $d$ of a code $C$ is defined as 
$$d=\min_{0\neq x\in C} {\rm wt}(x).$$

\noindent The space $\fqn$ is equipped with the standard symmetric bilinear form 
$\left\langle \cdot , \cdot \right\rangle$, defined by
$$ \left\langle x,y\right\rangle = \sum_{i=1}^n x_iy_i \ $$
for $x=(x_1,\ldots,x_n), y=(y_1,\ldots,y_n) \in \fqn$. Clearly this bilinear form is non-degenerate;
i.e., for every $0 \ne x \in \fqn$ there is some $y \in \fqn$ such that $\left\langle x,y\right\rangle \neq 0$.
 Two vectors $u,v \in \fqn$ are called orthogonal if $\left\langle u,v \right\rangle =0$. In this case we 
 also write $u \bot v$. If $C \subseteq \fqn$ is a code then the set
 $$ C^{\bot} : = \{ y \in \fqn \ | \ y \bot x \ {\hbox {\rm for all}} \ x \in C \}
$$
is also a linear subspace of $\fqn$; it is called the dual code of $C$. It is clear from linear algebra
that 
\begin{equation}\label{1.1}
\dim C + \dim C^{\bot} = n \ .
\end{equation}
A code $C$ is called self-orthogonal if $C \subseteq C^{\bot}$. It is called self-dual if $C=C^{\bot}$.
An $[n,k,d]$-code $C$ is a code of length $n$, dimension $k$ and minimum distance $d$. The ratios 
$$ R(C) := k/n \ \ {\hbox {\rm and}} \ \ \delta(C) := d/n $$
are called the rate and the relative minimum distance of $C$, resp. It is clear from Equation ({\ref{1.1}})
that the rate of a self-orthogonal code $C$ satisfies $R(C) \le 1/2$; a self-dual code has rate 
$R(C) = 1/2$.

It has been known for a long time that the class of self-dual codes over $\Fq$ is asymptotically good 
and it attains the Gilbert--Varshamov bound (\cite{MWST,PP}). This means: there exists
a sequence $(C_i)_{i \ge 0}$ of self-dual codes over $\Fq$ with length $n_i \to \infty$ such that
the limit $\delta : = \lim_{i \to \infty} \delta (C_i)$  exists and the point $(\delta , R) \in \mathbb R ^2$
with $R=1/2$
lies on or above the Gilbert--Varshamov bound
\begin{equation}\label{1.2}
 R \ge 1 - H_q(\delta) \ .
\end{equation}
Here $H_q(\delta)$ denotes the $q$-ary entropy function, defined by $H_q(0) = 0$ and
$$ H_q(\delta) = \delta \log_q (q-1) - \delta \log_q (\delta) - (1-\delta)\log_q (1-\delta)
$$
for $0 < \delta \le 1 - q^{-1}$. 

For $q=\ell^2$ a square, the bound \eqref{1.2} was improved in \cite{trans}. More precisely it was shown that the class of self-dual codes attains the Tsfasman--Vladut--Zink bound; i.e., there is a sequence of self-dual codes $(C_i)_{i\geq0}$ over $\mathbb F_q$ with parameters $[n_i,n_i/2,d_i]$ with $n_i\to \infty$ and
\begin{equation} \label{transeq}
\liminf_{i\to \infty} \frac{d_i}{n_i}\geq \frac{1}{2} - \frac{1}{\ell -1}.
\end{equation}

Our aim is to extend \eqref{transeq} to all nonprime finite fields and hence to improve the bound \eqref{1.2} for almost all nonprime values of $q$ (all, except $q\leq 49$ and $q=125$).

Our proof relies on the use of specific towers of algebraic function fields having many rational 
places. These towers will allow us to construct self-orthogonal algebraic geometry codes $C_i$ of increasing
 length whose dual codes 
$C_i^{\bot}$ have a large minimum distance. We will then show that there  are self-dual codes 
$\widetilde{C}_i$ with $C_i \subseteq \widetilde{C}_i \subseteq C_i^{\bot}$ whose relative minimum distance satisfies the 
corresponding Tsfasman-Vladut-Zink bound 
\begin{equation}\label{1.3}
 \liminf_{i \to \infty}\delta(\widetilde{C}_i) \ge  \frac{1}{2}-\frac{1}{2}\Bigl(\frac{1}{\ell^{\lceil r/2\rceil}-1}+\frac{1}{\ell^{\lfloor r/2\rfloor}-1}\Bigr) \quad \text{ for } q = \ell^r, r>1 \text{ odd.}
 \end{equation}

Taking $r=2$ in \eqref{1.3}, we recover \eqref{transeq}. Together, \eqref{transeq} and \eqref{1.3} give the following result
\begin{theorem} For any nonprime finite field $\mathbb F_q$ with $q\geq 64$, except possibly $q=125$, there are self-dual codes over $\mathbb F_q$ better than the Gilbert--Varshamov bound.
\end{theorem}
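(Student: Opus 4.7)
The plan is to obtain the theorem as a direct corollary of the asymptotic bounds \eqref{transeq} and \eqref{1.3}, once these have been established. Let $\delta_{GV}(q)$ denote the unique solution in $(0,1-1/q)$ of $H_q(\delta)=1/2$, so that any sequence of rate-$1/2$ codes with $\liminf_i\delta(C_i)>\delta_{GV}(q)$ strictly surpasses the Gilbert--Varshamov bound \eqref{1.2}. Any nonprime prime power $q$ is either a perfect square or of the form $\ell^r$ with $r\geq 3$ odd. In the first case, \eqref{transeq} applied with base field $\mathbb{F}_{\sqrt q}$ produces self-dual codes satisfying $\liminf\delta\geq\tfrac12-\tfrac{1}{\sqrt q-1}$; in the second case, \eqref{1.3} produces self-dual codes satisfying $\liminf\delta\geq\tfrac12-\tfrac12\bigl(\tfrac{1}{\ell^{\lceil r/2\rceil}-1}+\tfrac{1}{\ell^{\lfloor r/2\rfloor}-1}\bigr)$.

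It then remains to verify, for every nonprime $q\geq 64$ with $q\neq 125$, that the appropriate lower bound strictly exceeds $\delta_{GV}(q)$. Linearizing $H_q$ near $\delta=1/2$ gives $H_q(1/2)=\tfrac12\log_q(q-1)+\log_q 2$ and $H_q'(1/2)=\log_q(q-1)\to 1$ as $q\to\infty$, so that $\delta_{GV}(q)=\tfrac12-\log_q 2+O((\log q)^{-2})$. The GV gap $\tfrac12-\delta_{GV}(q)$ is thus of logarithmic order in $q$, whereas the TVZ-type gap is of polynomial order (at worst $q^{-1/r}$); for each fixed $r$ the required strict inequality therefore holds for all sufficiently large $q$. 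The finitely many borderline values---$q\in\{64,81,121,128,169,243,256,343,\ldots\}$ up to an explicit threshold in each family---are then dispatched by direct numerical tabulation of the two quantities.

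The delicate regime is $r=3$: there the TVZ gap is dominated by $\tfrac{1}{2(\ell-1)}\sim\tfrac12 q^{-1/3}$, to be compared with the GV gap $\approx\tfrac{\log 2}{3\log\ell}$, and the two are comparable in size for $\ell$ near $5$. Direct computation shows that every prime power $\ell\geq 7$ (that is, $q=343,512,729,\ldots$) satisfies the strict inequality, while $\ell=5$ (i.e.\ $q=125$) fails it by a narrow margin; this is precisely the source of the exception in the statement, and the square-case bound $1/2-1/(\sqrt q-1)$ cannot rescue this $q$ since $125$ is not a perfect square. Thus the one substantive obstacle, once \eqref{transeq} and \eqref{1.3} are in hand, is the borderline check at $q=125$; the remainder of the deduction from the two ingredient bounds is a routine comparison.
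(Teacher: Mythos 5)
Your proposal is correct and follows essentially the same strategy the paper carries out in Remark~\ref{remarklabel}: compare the TVZ--type lower bound $\tfrac12-\epsilon$ against the GV threshold $\delta_0$ with $H_q(\delta_0)=1/2$, reduce the comparison to an asymptotic inequality by expanding $H_q$ near $\delta=1/2$, and settle the finitely many remaining $(\ell,r)$ by direct computation. The only substantive difference is one of explicitness: where you defer to ``direct numerical tabulation,'' the paper pushes the Taylor estimates through to the closed-form sufficient condition $\ell^{\lfloor r/2\rfloor}>3+2\ln(\ell^r)$, verifies it for $\ell\geq 23$ or $r>7$, and then checks the remaining cases by hand, recovering exactly the failures you identify (namely $\ell^r\leq 49$ and $(\ell,r)\in\{(5,3),(4,3)\}$, with $(2,6)$ rescuing $q=64$).
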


\section{Embedding Self-Orthogonal Codes into Self-Dual Codes} \label{sec2}

In this section we show that every self-orthogonal code $C \subseteq \fqn$ can be
extended to a self-dual code $\widetilde{C} \subseteq \fqn$, if at least one self-dual code exists in $\fqn$. 
The results in this section should be classically known in the theory of quadratic spaces over finite fields and finite geometries. In particular, Lemma~\ref{lemma2.2} follows immediately from the classification of quadratic spaces over finite fields by dimension and discriminant and can be found in \cite[Theorem 1]{PlessGolay}, and results along the lines of  Theorem~\ref{thm2.1} for $q=2$ can be found in \cite{MWST}, among possibly others. Since we could not find an good reference in this form and generality, we give proofs of these results below. 

\begin{theorem}\label{thm2.1}
Let $C \subseteq \fqn$ be a self-orthogonal code over $\Fq$ of length $n$. Assume that
\begin{itemize}
\item[($\star$)] $n$ is even and, in case $q \equiv 3 \pmod 4$, $n$ is a multiple of $4$.
\end{itemize}
Then there exists a self-dual code $\widetilde{C} \subseteq \fqn$ such that $C \subseteq \widetilde{C}$.
\end{theorem}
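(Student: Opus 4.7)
The plan is an induction on the codimension $d := n/2 - \dim C \geq 0$. The base case $d=0$ is automatic: $\dim C + \dim C^\bot = n$ together with $C \subseteq C^\bot$ forces $C = C^\bot$. For the inductive step it suffices to produce a single vector $v \in C^\bot \setminus C$ with $\langle v, v\rangle = 0$, since then $C' := C + \Fq v$ is again self-orthogonal (as $v \bot C$ and $v \bot v$) of dimension $\dim C + 1$; the hypothesis $(\star)$ on $n$ is unchanged, so by induction $C'$ embeds in a self-dual code $\widetilde{C}$ containing $C$.

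The natural object to examine is the quotient $V := C^\bot / C$. The standard form on $\fqn$ restricts to $C^\bot$ and descends to $V$, and the descended form is non-degenerate because the radical of the form on $C^\bot$ equals $C^\bot \cap (C^\bot)^\bot = C^\bot \cap C = C$. Thus $V$ is a non-degenerate quadratic space of (necessarily even) dimension $n - 2\dim C$, and the task becomes: produce a non-zero isotropic vector in $V$. When $\dim V \geq 3$ this is classical, since every non-degenerate quadratic form in at least three variables over a finite field has a non-trivial zero (e.g.\ by Chevalley--Warning, or by a short counting of squares); no hypothesis on $n$ is needed here.

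The delicate case, and the main obstacle, is $\dim V = 2$: here $V$ is either the hyperbolic plane (with isotropic lines) or the anisotropic two-dimensional form, and one must rule out the anisotropic alternative. To do so I would invoke Lemma~\ref{lemma2.2}: under $(\star)$, the space $\fqn$ with the standard form admits a self-dual code, hence is itself hyperbolic as a quadratic space. Choose $K \subseteq \fqn$ of dimension $\dim C$ pairing non-degenerately with $C$ via $\langle \cdot , \cdot \rangle$ and arranged to be totally isotropic; this gives an orthogonal decomposition $\fqn = (C \oplus K) \perp W$ in which $C \oplus K$ is hyperbolic and $W$ is isometric, as a quadratic space, to $V$. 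Since both $\fqn$ and $C \oplus K$ are hyperbolic, Witt cancellation over $\Fq$ (or, equivalently, an explicit discriminant comparison via the classification in Lemma~\ref{lemma2.2}) forces $W$, and hence $V$, to be hyperbolic. This produces the required isotropic vector, completes the inductive step, and thereby the theorem.
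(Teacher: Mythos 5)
Your approach is genuinely different from the paper's. The paper produces a reference self-dual code $E$ (Lemma~\ref{lemma2.2}), notes that any linear injection $C\hookrightarrow E$ is automatically an isometry (both forms vanish identically), extends it to a global isometry of $\fqn$ by Witt's theorem (or Pless's variant in characteristic $2$), and pulls $E$ back. You instead induct on codimension by locating a nonzero isotropic vector in the quotient $C^\perp/C$, using the classical isotropy of higher-dimensional forms for $\dim V\geq 3$ and a Witt-cancellation/discriminant argument in the critical $\dim V=2$ case. In odd characteristic the argument works: $V$ carries a well-defined non-degenerate quadratic form, a totally isotropic complement $K$ to $C$ exists, $\fqn\cong(C\oplus K)\perp W$ with $W\cong V$, and Witt cancellation together with Lemma~\ref{lemma2.2} (or, as you say, a direct discriminant count, noting that under $(\star)$ the discriminant of $W$ is $(-1)^{n/2-1}$) forces $W$ to be hyperbolic. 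This is a perfectly good alternative in odd characteristic, somewhat more ``hands-on'' than the paper's, and it isolates exactly where the hypothesis $(\star)$ matters.

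In characteristic $2$, however, there is a genuine gap. First, the function $Q(\bar v)=\langle v,v\rangle=(\sum v_i)^2$ is additive, not a quadratic form with non-degenerate polarization; its polar form $Q(\bar v+\bar w)-Q(\bar v)-Q(\bar w)=2\langle v,w\rangle$ is identically zero, so Chevalley--Warning and the classification of non-degenerate quadratic forms do not apply to $V$ in the way you invoke. More seriously, Witt cancellation for symmetric bilinear forms \emph{fails} in characteristic $2$: over $\F_2$, for instance, $\langle 1\rangle\perp H$ and $\langle 1\rangle\perp\langle 1\rangle\perp\langle 1\rangle$ are isometric (both are non-alternating of dimension $3$), yet $H\not\cong\langle 1\rangle\perp\langle 1\rangle$ since one is alternating and the other is not. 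So the crucial step ``$\fqn$ hyperbolic and $C\oplus K$ hyperbolic $\Rightarrow W$ hyperbolic'' is not licensed, and the ``discriminant comparison'' is not available either, since the discriminant (mod squares) is trivial over $\F_{2^m}$. This is exactly the same difficulty the paper confronts, which is why it must appeal to Pless's modified Witt theorem. The repair in your framework is direct and bypasses cancellation altogether: since $C$ is self-orthogonal, every $c\in C$ satisfies $(\sum c_i)^2=\langle c,c\rangle=0$, so $C\subseteq\boldsymbol{1}^\perp$, i.e.\ $\boldsymbol{1}\in C^\perp$; moreover $\langle\boldsymbol{1},\boldsymbol{1}\rangle=n\cdot 1=0$ because $n$ is even. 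Hence if $\boldsymbol{1}\notin C$ take $v=\boldsymbol{1}$, and if $\boldsymbol{1}\in C$ then $C^\perp\subseteq\boldsymbol{1}^\perp$, so every vector of $C^\perp$ is isotropic and any $v\in C^\perp\setminus C$ works. With that patch the inductive step closes in all characteristics and your argument becomes a complete proof.
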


Condition ($\star$) above is necessary and sufficient for the existence of a self-dual code of length $n$ over $\Fq$. Necessity can be seen easily by using the discriminant of the bilinear form $\langle\cdot,\cdot \rangle$, but we do not need it here. To show sufficiency (to show that under the
above condition ($\star$) at least one self-dual code over $\Fq $ of length $n$ exists) constitutes the first step for
the proof of Theorem \ref{thm2.1}:

\begin{lemma}\label{lemma2.2}
Assume that $n$ is even and, in case $q \equiv 3 \pmod 4$, $n$ is a multiple of $4$. Then there exists a self-dual code $E \subseteq \fqn$.
\end{lemma}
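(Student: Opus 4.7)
The plan is to reduce to constructing self-dual codes of small length (either $2$ or $4$) and then take direct sums, since the direct product of self-dual codes $E_i \subseteq \Fq^{n_i}$ is automatically self-dual in $\Fq^{n_1+n_2}$ (the inner product decomposes over the two blocks, and the dimensions add to $(n_1+n_2)/2$). Thus condition ($\star$) on $n$ is precisely what is needed, provided that I can produce the right building block in each case.

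First I would treat the case in which $-1$ is a square in $\Fq$, which covers $q$ even and $q \equiv 1 \pmod 4$. Choosing $i \in \Fq$ with $i^2 = -1$, the code $\langle (1,i) \rangle \subseteq \Fq^2$ satisfies $\langle (1,i),(1,i)\rangle = 1+i^2 = 0$; being $1$-dimensional and self-orthogonal in $\Fq^2$, it is self-dual. Forming the direct sum of $n/2$ copies then produces a self-dual code of length $n$ for any even $n$, which settles the lemma in these cases.

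For $q \equiv 3 \pmod 4$ (so that $-1$ is a nonsquare), I would invoke the classical fact that every element of $\Fq$ is a sum of two squares; this follows from a one-line pigeonhole argument, since both sets $\{a^2 : a \in \Fq\}$ and $\{-1-b^2 : b \in \Fq\}$ have cardinality $(q+1)/2$ and hence must meet. Fixing $a,b \in \Fq$ with $a^2 + b^2 = -1$, I would let $E_0 \subseteq \Fq^4$ be the row space of
\[
G = \begin{pmatrix} 1 & 0 & a & b \\ 0 & 1 & -b & a \end{pmatrix}.
\]
The identity $GG^T = 0$ is immediate from $a^2+b^2=-1$, so $E_0$ is self-orthogonal of dimension $2 = 4/2$, and therefore self-dual. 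Since $n$ is a multiple of $4$ in this case, taking $n/4$ direct copies of $E_0$ yields the desired self-dual $E \subseteq \Fq^n$.

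The construction is entirely explicit and the main (mild) obstacle is simply choosing the right length-$4$ building block when $q \equiv 3 \pmod 4$; the only nontrivial input is the sum-of-two-squares lemma over $\Fq$, after which the whole argument collapses to a single Gram-matrix computation and a direct-sum assembly.
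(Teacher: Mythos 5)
Your proposal is correct and follows essentially the same route as the paper: both construct a self-dual length-$2$ block $\langle(\alpha,1)\rangle$ when $-1$ is a square (with $\alpha=1$ in characteristic $2$), both use the same pigeonhole argument to find $\alpha,\beta$ with $\alpha^2+\beta^2=-1$ when $q\equiv 3\pmod 4$ and build the same length-$4$ block (up to a permutation of coordinates), and both then fill out the length by repeating the block. The only cosmetic difference is that you phrase the assembly as a direct sum of self-dual codes, whereas the paper simply writes out the resulting block-diagonal basis explicitly.
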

\begin{proof} First we consider case that $q$ is even or $q \equiv 1 \pmod{4}$ and $n=2m$ is even. If $q$ is even, let $\alpha=1$. If 
$q \equiv 1 \pmod 4$, since the multiplicative group $\Fq^{\times}$ is cyclic
of order $q-1 \equiv 0 \pmod 4$, there is an element $\alpha \in \Fq^{\times}$ such that
$\alpha^2 = -1$. We consider the vectors
$$
c_1 = (\alpha, 1, 0,0, \ldots, 0,0), \ \ c_2 = (0, 0, \alpha, 1, \ldots, 0,0),\ \ldots \ ,\
c_m = (0, 0,  \ldots , 0, 0 ,  \alpha, 1).$$
These vectors span an $m$-dimensional subspace $E \subseteq \fqn$ which is obviously self-dual.

Next we consider the case $q \equiv 3 \pmod 4$ and $n = 4k$. Since $\Fq^{\times}$ has order $q-1 \equiv 2 \pmod 4$,
the element $-1 \in \Fq$ is not a square. So the set
$$
A : = \{ 1 + \alpha^2 \ | \ \alpha \in \Fq \} \subseteq \Fq^{\times}
$$
has cardinality $|A| = (q+1)/2$. Let
$$
B : = \{ - \beta^2 \ | \ \beta \in \Fq^{\times} \} \subseteq \Fq^{\times} \ ,
$$
then $|B| = (q-1)/2$, so the intersection $A \cap B$ is non-empty. Therefore we find elements
$\alpha , \beta \in \Fq$ satisfying $\alpha^2 + \beta^2 + 1 = 0$. We consider the vectors
\begin{eqnarray*} c_1 &=& (\alpha, \beta, 1, 0, 0,0,0,0, \ldots ,0,0,0,0), \ c_2 = (0,0,0,0,\alpha, \beta, 1, 0,\ldots ,0,0,0,0), \ldots , \\
c_k &=& (0,0,0,0, \cdots , 0,0,0,0,\alpha, \beta, 1, 0) \end{eqnarray*}
and
\begin{eqnarray*} d_1 &=& (-\beta, \alpha, 0, 1, 0,0,0,0, \ldots , 0,0,0,0), \ d_2 = (0,0,0,0,-\beta, \alpha, 0, 1,  \ldots , 0,0,0,0), \ldots , \\
d_k &=& (0,0,0,0, \cdots ,0,0,0,0,-\beta, \alpha, 0, 1) \ .\end{eqnarray*}
Now the vectors $c_1, d_1, \ldots, c_k,d_k$ span a self-dual code $E \subseteq \fqn$.
\end{proof}
In order to prove  
Theorem \ref{thm2.1} we use Witt's 
Theorem which holds in a more general setting as follows. 
Let $K$ be an arbitrary field and let $V$ be a 
vector space over $K$. Let $s: V\times V \longrightarrow K$ be a symmetric bilinear form, and
let $W \subseteq V$ be a subspace of $V$.
 An injective linear map
$\varphi : W \longrightarrow V$ is called an isometry from $W $ to $V$ if $\left\langle \varphi (w_1),
\varphi (w_2)\right\rangle = \left\langle w_1,w_2\right\rangle$ holds for all $w_1,w_2 \in W$. Now we
can state Witt's Theorem.

\begin{theorem}[Witt \cite{Se}]\label{thm2.3} Let  $V$ be a vector space over $K$ with 
$\dim V = n < \infty$, where $K$ is a field of characteristic ${\rm char} \ K \ne 2$. Let $s$ be a
non-degenerate symmetric bilinear form on $V$ and let $W \subseteq V$ be a subspace of $V$. Assume that
$\varphi:W \longrightarrow V$ is an isometry. Then $\varphi$ can be extended to an isometry $\widetilde {\varphi}
: V \longrightarrow V$; i.e., $\varphi$ is the restriction of $\widetilde{\varphi}$ to $W$.
\end{theorem}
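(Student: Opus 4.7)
The plan is to prove Witt's theorem by induction on $\dim V$, using reflections as the key building block. For any non-isotropic $u \in V$, the map $\sigma_u(v) := v - \frac{2 s(u,v)}{s(u,u)} u$ is an isometry of $V$ fixing $u^\perp$ pointwise and sending $u$ to $-u$; the hypothesis $\Char K \ne 2$ enters here. The engine of the argument is the one-dimensional case: for $W = Kw$ with $w$ non-isotropic and $w' := \varphi(w)$, one has $s(w,w) = s(w',w')$, so the identity $s(w-w', w-w') + s(w+w', w+w') = 4 s(w,w) \ne 0$ forces at least one of $w \pm w'$ to be non-isotropic; a direct calculation shows $\sigma_{w-w'}(w) = w'$ (resp.\ $-\sigma_{w+w'}(w) = w'$), producing a global isometry of $V$ extending $\varphi$.

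For the inductive step on a general $W$, I split into two cases. Assume first that $W$ contains a non-isotropic vector $w$. Let $\sigma$ be an isometry of $V$ produced by the one-dimensional argument with $\sigma(w) = \varphi(w)$, and replace $\varphi$ by $\psi := \sigma^{-1}\varphi$, which now fixes $w$. Then $V = Kw \oplus (Kw)^\perp$ and $W = Kw \oplus W'$ with $W' := W \cap (Kw)^\perp$; the computation $s(\psi(w_1), w) = s(\psi(w_1), \psi(w)) = s(w_1, w) = 0$ shows $\psi(W') \subseteq (Kw)^\perp$. The form restricts non-degenerately to $(Kw)^\perp$ and the ambient dimension has dropped, so the induction hypothesis extends $\psi|_{W'}$ to an isometry $\tau$ of $(Kw)^\perp$; setting $\tau(w) := w$ and composing with $\sigma$ yields the required extension of $\varphi$ to $V$.

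The remaining case is $s|_W \equiv 0$, which (via polarization and $\Char K \ne 2$) is equivalent to every vector of $W$ being isotropic. Pick $0 \ne w \in W$, use non-degeneracy of $s$ on $V$ to find $v \in V$ with $s(w,v) = 1$, and set $u := v - \tfrac{1}{2} s(v,v) w$, so $s(w,u) = 1$ and $s(u,u) = 0$; since $W \perp W$, one has $u \notin W$. The plan is to extend $\varphi$ to an isometry $\widetilde{\varphi}: W + Ku \to V$ by choosing $u' := \widetilde{\varphi}(u)$ with $s(\varphi(w_1), u') = s(w_1, u)$ for all $w_1 \in W$ and $s(u',u') = 0$. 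Non-degeneracy of $s$ first produces some $u'' \in V$ satisfying the linear conditions (determined modulo $\varphi(W)^\perp$); then, using that $\varphi(w) \in \varphi(W)^\perp$ is itself isotropic, I adjust $u' := u'' - \tfrac{1}{2} s(u'', u'') \varphi(w)$ to additionally secure $s(u',u') = 0$. Since $W + Ku$ now contains the non-isotropic vector $w + u$ (indeed $s(w+u, w+u) = 2$), the previous case applies to $\widetilde{\varphi}$ and delivers an extension to all of $V$.

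The main obstacle I expect is exactly this totally isotropic case: one must simultaneously satisfy a linear system of orthogonality conditions and a quadratic isotropy condition on $u'$. The critical observation that makes this tractable is that the linear ambiguity $\varphi(W)^\perp$ contains the \emph{isotropic} vector $\varphi(w)$, so adjusting $u''$ along $\varphi(w)$ modifies $s(u'',u'')$ linearly rather than quadratically, reducing the matter to solving $2\lambda = -s(u'',u'')$ --- which succeeds precisely because $\Char K \ne 2$, the same hypothesis already used for the reflections.
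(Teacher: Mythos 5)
Your proposal is correct and complete. Note that the paper itself gives no proof of this statement: it is quoted as Witt's theorem with a reference to Serre \cite{Se}, and your argument --- induction on $\dim V$ via reflections $\sigma_u$, the identity $s(w-w',w-w')+s(w+w',w+w')=4s(w,w)\ne 0$ to handle the one-dimensional case, splitting off $Kw\perp (Kw)^\perp$ when $W$ contains a non-isotropic vector, and enlarging a totally isotropic $W$ by a hyperbolic partner $u$ (with the linear-plus-quadratic adjustment $u'=u''-\tfrac12 s(u'',u'')\varphi(w)$, which works because $s(u'',\varphi(w))=s(u,w)=1$) --- is essentially the standard proof found there. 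The only points left implicit are trivial: the case $W=0$, and the injectivity of the extension to $W+Ku$ (which follows by pairing against $\varphi(w)$); neither affects correctness.
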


Pless has shown that under certain conditions, an analog of Witt's Theorem holds in characteristic $2$. In the particular case, where $K$ a finite field of characteristic $2$, it gives the following:
\begin{theorem}[Pless \cite{PlessWitt}]\label{thm2.4}  Let $\boldsymbol{1}=(1,1,\ldots,1)$. Consider the setting of Theorem~\ref{thm2.3}, where $K$ is a finite field of characteristic $2$. Assume moreover that the following holds:
If $\boldsymbol{1}\in W$, then $\varphi(\boldsymbol{1})=\boldsymbol{1}$. Otherwise, if $\boldsymbol{1}\notin W$, then $\boldsymbol{1}\notin \varphi(W)$. Then the conclusion of Theorem~\ref{thm2.3} holds.
\end{theorem}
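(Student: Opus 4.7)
The plan is to adapt the standard inductive proof of Witt's theorem to characteristic $2$, using the extra structure coming from the all-ones vector. The starting observation is that for the standard bilinear form on $V=K^n$,
$\langle v,v\rangle = \sum v_i^2 = \bigl(\sum v_i\bigr)^2 = L(v)^2$,
where $L(v):=\langle \boldsymbol{1},v\rangle$. Consequently any isometry $\varphi$ of $W$ satisfies $L(\varphi(w))^2=L(w)^2$, and since squaring is bijective on the finite field $K$ of characteristic~$2$, in fact $L\circ\varphi=L$ on $W$.

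The first step is a reduction to the case $\boldsymbol{1}\in W$ with $\varphi(\boldsymbol{1})=\boldsymbol{1}$. If $\boldsymbol{1}\notin W$, set $\widehat\varphi(\boldsymbol{1}):=\boldsymbol{1}$ and extend $\varphi$ linearly to $W+K\boldsymbol{1}$; the hypothesis $\boldsymbol{1}\notin \varphi(W)$ ensures that this extension is injective, and a direct expansion of $\langle\varphi(w)+\alpha\boldsymbol{1},\varphi(w')+\alpha'\boldsymbol{1}\rangle$ combined with $L\circ\varphi=L$ shows that $\widehat\varphi$ is still an isometry. Replacing $(W,\varphi)$ by $(W+K\boldsymbol{1},\widehat\varphi)$, we may henceforth assume $\boldsymbol{1}\in W$ and $\varphi(\boldsymbol{1})=\boldsymbol{1}$.

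The main step is an induction on $n-\dim W$. Given $v\in V\setminus W$, we seek $v'\in V$ with $\langle v',\varphi(w)\rangle = \langle v,w\rangle$ for all $w\in W$ together with $\langle v',v'\rangle=\langle v,v\rangle$. Non-degeneracy of the form produces some $v_0'$ fulfilling the linear condition, and the full solution set is then $v_0'+\varphi(W)^{\bot}$. In characteristic~$2$ the cross term vanishes, so the quadratic condition on $u\in \varphi(W)^{\bot}$ reduces to $\langle u,u\rangle = \langle v,v\rangle + \langle v_0',v_0'\rangle$, equivalently $L(u)=L(v)+L(v_0')$.

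The crux is the case analysis of the linear functional $L|_{\varphi(W)^{\bot}}$, which is either surjective---in which case any right-hand side is attained---or identically zero, and the latter happens precisely when $\boldsymbol{1}\in \varphi(W)$. In that case the reduction forces $\boldsymbol{1}\in W$ and $\varphi(\boldsymbol{1})=\boldsymbol{1}$, so substituting $w=\boldsymbol{1}$ in the linear condition yields $L(v_0')=\langle\varphi(\boldsymbol{1}),v_0'\rangle=\langle v,\boldsymbol{1}\rangle=L(v)$; thus the required value $L(v)+L(v_0')$ is zero and $u=0$ works. Since $\boldsymbol{1}\in W$ and $\varphi(\boldsymbol{1})=\boldsymbol{1}$ are automatically inherited by the extended map, the induction proceeds to completion. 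The main obstacle is exactly this degenerate case: the hypothesis on $\boldsymbol{1}$ in Theorem~\ref{thm2.4} is calibrated precisely to rescue it.
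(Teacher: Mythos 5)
First, a point of reference: the paper does not prove Theorem~\ref{thm2.4} at all --- it is quoted from Pless \cite{PlessWitt} --- so your argument stands on its own. Your strategy (the identity $\langle v,v\rangle=\langle\boldsymbol{1},v\rangle^{2}$ in characteristic $2$, the reduction to $\boldsymbol{1}\in W$ with $\varphi(\boldsymbol{1})=\boldsymbol{1}$, and a one-dimension-at-a-time extension) is the right one, and the reduction step and the case analysis of $L$ are correct. (A cosmetic remark: after your reduction $\boldsymbol{1}=\varphi(\boldsymbol{1})\in\varphi(W)$ always, so the ``surjective'' branch of your dichotomy never occurs; you are always in the degenerate case.)

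There is, however, a genuine gap: you never verify that the extended map is \emph{injective}, which is part of the paper's definition of isometry and is exactly what the inductive hypothesis needs when applied to $W\oplus Kv$. Your two conditions on $v'$ (the linear condition against $\varphi(W)$ and $\langle v',v'\rangle=\langle v,v\rangle$) do not exclude $v'\in\varphi(W)$, and your prescription $u=0$, i.e.\ $v'=v_0'$, can actually land there. Concretely, take $K=\F_2$, $n=4$, $W=\langle\boldsymbol{1}\rangle$, $\varphi=\mathrm{id}$, $v=(1,1,0,0)$: the linear condition reads $\langle v',\boldsymbol{1}\rangle=0$, so $v_0'=0$ is a legitimate output of ``non-degeneracy produces some $v_0'$''; your recipe then sets $v'=0$, the extension kills $v$, and it cannot be continued to an isometry of $V$. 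The gap is repairable inside your framework: as your own computation shows, \emph{every} element of the coset $v_0'+\varphi(W)^{\bot}$ satisfies both conditions, and this coset always meets $V\setminus\varphi(W)$. Indeed, if $v_0'=\varphi(w_0)\in\varphi(W)$, then $v-w_0$ is a nonzero element of $W^{\bot}$ not lying in $W$, so $W^{\bot}\not\subseteq W$; since $\varphi$ carries $W\cap W^{\bot}$ onto $\varphi(W)\cap\varphi(W)^{\bot}$ and $\dim W^{\bot}=\dim\varphi(W)^{\bot}$, it follows that $\varphi(W)^{\bot}\not\subseteq\varphi(W)$, and adding any $u\in\varphi(W)^{\bot}\setminus\varphi(W)$ to $v_0'$ gives an admissible $v'$ outside $\varphi(W)$. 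With that choice of $v'$ your induction closes; without it, the argument as written breaks at the first degenerate step.
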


\begin{proof}[Proof of Theorem {\rm \ref{thm2.1}}] We are given a self-orthogonal code $C \in \fqn$.
By Lemma \ref{lemma2.2} there exists
a self-dual code $E \subseteq \fqn$.  As $\dim  C \le \dim  E$, there is an injective linear map
$\varphi : C \longrightarrow E$ (if $q$ is even, choose $\varphi$ so that the condition in Theorem~\ref{thm2.4} is satisfied).  Since both codes $C$ and $E$ are self-orthogonal, it follows that
$\varphi$ is in fact an isometry, and we can extend $\varphi$ to an isometry $\widetilde{\varphi} : \fqn
\longrightarrow \fqn$ by Witt's Theorem (or Pless' Theorem if $q$ is even). Then the space $\widetilde{C} : = \widetilde{\varphi}^{-1} (E)$
is a self-dual code (as $\widetilde{\varphi}^{-1}$ is an isometry) and it contains $C$.
\end{proof}

\section{Self-dual algebraic geometry codes}\label{sec3}
\label{secion:goppa}
Let us first fix some notation. For background on the theory of algebraic function fields, we refer to \cite{Sti}.
We will consider function fields $F/\mathbb F_q$ where $\mathbb F_q$ is the full constant field of $F$. We will denote by

\begin{tabular}{ll}
 $g(F)$ & the genus of $F$,\\
 $(x)$ & the principal divisor of $0\neq x\in F$,\\
 $x(P)$ & the value of the function $x\in F$ at the place $P\in \mathbb P_F$,\\
 $\mathbb P_F$ & the set of places of $F/ \mathbb F_q$,\\
 $v_P$ & the normalized discrete valuation of $F/\mathbb F_q$ associated with the \\ & place $P\in \mathbb P_F$,\\
 $N(F)$ & the number of places of degree one (rational places) of $F/\mathbb F_q$,\\
 ${\rm supp} A$ & the support of the divisor $A$ of $F/ \mathbb F_q$,\\
 $(\omega)$ & the divisor of the differential $\omega \neq 0$,\\
 ${\rm res}_P(\omega)$ & the residue of the differential $\omega$ at the place $P\in \mathbb P_F$.\\
\end{tabular}

For a divisor $A$ of $F/\mathbb F_q$ we define the Riemann-Roch space
$$  L(A):=\{x\in F^\times | (x)+A \geq 0\} \cup \{0\}.$$


For a finite separable extension $E$ of $F$ we will denote by 

\begin{tabular}{ll}
 ${\rm Con}_{E/F}(A)$ & the conorm of the divisor $A$ of $F$ in $E/F$,\\
 ${\rm Cotr}_{E/F}(\omega)$ & the cotrace of the differential $\omega$ of $F$ in $E/F$,\\
 ${\rm Diff}(E/F)$ & the different of the extension $E/F$.\\
\end{tabular}

A rational place $P\in \mathbb P_F$ is said to split completely in the extension $E/F$ if there are $[E:F]$ distinct places of $E$ above $P$.

Let $F/ \mathbb F_q$ be an algebraic function field of genus $g$ over the finite field $\mathbb F_q$. Let $P_1, P_2, \ldots P_n$ be pairwise different rational places of $F/\mathbb F_q$. Put $D=P_1+P_2+\ldots+P_n$, and let $G$ be a divisor of $F/\mathbb F_q$, such that ${\rm supp}\, G \cap {\rm supp}\, D=\varnothing$. We consider the algebraic geometry code $  C_{  L}(G,D)$, which is as usual defined as follows:
$$  C_{  L}(G,D):=\{(f(P_1),f(P_2),\ldots,f(P_n))|f\in   L(G)\}.$$
It is well known that this is a linear code of length $n$ and minimum distance $d$, with
$$d \geq n-\deg G\quad (\text{if }   C_{  L}(G,D)\neq 0).$$

In \cite{self}, sufficient criteria for self-duality of algebraic geometry codes are given. In particular, we have the following description of the dual $ C_{  L}(G,D)^\perp$ of the code $C_{  L}(G,D)$:
\begin{theorem}
\label{thm:orthogonal}
Suppose $\omega$ is a differential such that 
\begin{enumerate}
\item $v_{P_i}(\omega)=-1$, for $i=1, 2,\ldots, n$,
\item ${\rm res}_{P_i}(\omega)= {\rm res}_{P_j}(\omega)$ for $1\leq i,j\leq n$.
\end{enumerate}
Then we have 
$$  C_{  L}(G,D)^\perp=  C_{  L}(D+(\omega)-G,D).$$
\end{theorem}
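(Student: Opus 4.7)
The plan is to prove Theorem~\ref{thm:orthogonal} in two steps: first establish the inclusion $C_L(D+(\omega)-G,D)\subseteq C_L(G,D)^\perp$ via the residue theorem, then match dimensions using Riemann--Roch. Let $c:={\rm res}_{P_i}(\omega)\in \Fq$ denote the common residue value. Since $v_{P_i}(\omega)=-1$ is a simple pole, necessarily $c\neq 0$, and this nonvanishing will be crucial.

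For the inclusion, fix $f\in L(D+(\omega)-G)$ and $g\in L(G)$; the goal is to show $\sum_{i=1}^n f(P_i)g(P_i)=0$. The natural differential to play with is $\eta:=fg\omega$. A direct computation of valuations shows that $\eta$ is holomorphic outside $\supp D$: for $P\notin \supp D$ we have $v_P(f)\geq v_P(G)-v_P(\omega)$ and $v_P(g)\geq -v_P(G)$, so $v_P(\eta)\geq 0$. At each $P_i$, the hypothesis $\supp G\cap\supp D=\varnothing$ gives $v_{P_i}(G)=0$, and combined with $v_{P_i}(\omega)=-1$ this forces $v_{P_i}(f)\geq 0$ and $v_{P_i}(g)\geq 0$, so $fg$ is regular at $P_i$. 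Therefore ${\rm res}_{P_i}(\eta)=(fg)(P_i)\cdot {\rm res}_{P_i}(\omega)=c\cdot f(P_i)g(P_i)$, and the residue theorem applied to $\eta$ yields $c\sum_i f(P_i)g(P_i)=0$. Dividing by the nonzero scalar $c$ delivers the desired orthogonality.

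For equality I would compute both dimensions. By definition $\dim C_L(G,D)=\ell(G)-\ell(G-D)$ and analogously $\dim C_L(D+(\omega)-G,D)=\ell(D+(\omega)-G)-\ell((\omega)-G)$. Applying Riemann--Roch to each of the two terms on the right (using that $(\omega)$ is a canonical divisor of degree $2g-2$) should produce, after the $\deg G$ and $g$ terms cancel, the value $n-\ell(G)+\ell(G-D)=n-\dim C_L(G,D)=\dim C_L(G,D)^\perp$. An inclusion of subspaces with matching dimensions is an equality, completing the proof.

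The main technical point is the interplay between the three hypotheses: the simple-pole condition $v_{P_i}(\omega)=-1$ together with $\supp G\cap \supp D=\varnothing$ is what forces $fg$ to be regular at each $P_i$, so that ${\rm res}_{P_i}(fg\omega)$ factors cleanly as a product of values, while the equal-residues condition lets the common factor $c$ be pulled out of the sum. Weakening any one of these would break the argument, but given all three I do not expect a serious obstacle beyond careful bookkeeping of valuation inequalities and the standard Riemann--Roch manipulation.
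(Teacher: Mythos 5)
Your proof is correct, and it is exactly the standard argument that the paper delegates to by citing \cite{self} (and which appears in essentially the same form in Stichtenoth's textbook): the residue theorem applied to $fg\omega$ yields the inclusion $C_L(D+(\omega)-G,D)\subseteq C_L(G,D)^\perp$, and a Riemann--Roch dimension count upgrades it to equality. One small point worth stating explicitly when you pull out the scalar $c$: the fact that $c\neq 0$ follows because $v_{P_i}(\omega)=-1$ means the coefficient of $t^{-1}$ in the local expansion of $\omega$ at $P_i$ is nonzero, which is precisely $\res_{P_i}(\omega)$; you assert this correctly but it deserves the one-line justification. Equivalently, one could normalize by replacing $\omega$ with $\omega/c$ (same divisor, all residues $1$) and invoke the classical statement directly.
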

\begin{proof}
See \cite{self}.
\end{proof}

\begin{corollary} 
\label{cor2}
(in the setting as above) If $D+(\omega)\geq 2G$ then $  C_{  L}(G,D)\subseteq   C_{  L}(G,D)^\perp$.
\end{corollary}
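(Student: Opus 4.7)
The plan is to reduce the inclusion $C_L(G,D) \subseteq C_L(G,D)^\perp$ to a chain of inclusions of Riemann--Roch spaces and then apply Theorem~\ref{thm:orthogonal} to identify the enveloping code with the dual.

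First I would rewrite the hypothesis $D+(\omega) \geq 2G$ as the equivalent divisor inequality $D+(\omega)-G \geq G$. By the standard monotonicity of Riemann--Roch spaces (if $A_1 \leq A_2$ then $L(A_1) \subseteq L(A_2)$), this immediately yields
$$L(G) \subseteq L(D+(\omega)-G).$$
Applying the evaluation map $f \mapsto (f(P_1),\ldots,f(P_n))$, which is well-defined on both spaces because the support condition $\mathrm{supp}\,G \cap \mathrm{supp}\,D = \varnothing$ carries over to $D+(\omega)-G$ (since the places $P_i$ do not lie in the support of $G$ nor contribute to $(\omega)$ with positive multiplicity in a way that interferes, as $v_{P_i}(\omega)=-1$), we obtain the inclusion of codes
$$C_L(G,D) \subseteq C_L(D+(\omega)-G, D).$$

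Finally, since by assumption we are in the setting of Theorem~\ref{thm:orthogonal} (the differential $\omega$ satisfies $v_{P_i}(\omega)=-1$ with equal residues at every $P_i$), that theorem identifies
$$C_L(D+(\omega)-G, D) = C_L(G,D)^\perp,$$
and combining this equality with the previous inclusion gives $C_L(G,D) \subseteq C_L(G,D)^\perp$, as required. There is no real obstacle here: the corollary is essentially a direct translation of the divisor hypothesis through Theorem~\ref{thm:orthogonal}, and the only thing to be mildly careful about is checking that the support condition needed to define the code $C_L(D+(\omega)-G, D)$ is inherited from the setup of the theorem, which it is.
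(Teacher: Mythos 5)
Your proof is correct and follows essentially the same route as the paper: rewrite the hypothesis as $G \leq D+(\omega)-G$, use monotonicity of Riemann--Roch spaces to get $L(G) \subseteq L(D+(\omega)-G)$, pass to the codes, and invoke Theorem~\ref{thm:orthogonal} to identify $C_L(D+(\omega)-G,D)$ with $C_L(G,D)^\perp$. The extra remark that $v_{P_i}(D+(\omega)-G)=0$ (so the evaluation map is well-defined on the larger Riemann--Roch space) is a harmless bit of care the paper leaves implicit.
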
 
\begin{proof}
\begin{align*}
G \leq D+(\omega)-G &\Rightarrow   L(G) \subseteq   L(D+(\omega)-G)\\
 &\Rightarrow   C_{  L}(G,D) \subseteq   C_{  L}(D+(\omega)-G,D)=C_{  L}(G,D)^\perp.
 \end{align*}
\end{proof}
\begin{corollary}
\label{cor3} \label{corself}
(in the setting as above) If $D+(\omega)=2G$ then $  C_{  L}(G,D)=  C_{  L}(G,D)^\perp$; i.e., the code $  C_{  L}(G,D)$ is self-dual.
\end{corollary}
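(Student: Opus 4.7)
The plan is to read the identity off Theorem~\ref{thm:orthogonal} directly by specialising the divisor identity. The corollary is stated ``in the setting as above,'' so the two hypotheses on $\omega$ --- namely $v_{P_i}(\omega)=-1$ for each $i$, and ${\rm res}_{P_i}(\omega)={\rm res}_{P_j}(\omega)$ for all $i,j$ --- are in force. Theorem~\ref{thm:orthogonal} then yields
\[
C_L(G,D)^\perp \;=\; C_L(D+(\omega)-G,\,D).
\]

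Substituting the hypothesis $D+(\omega)=2G$ into the right-hand divisor gives $D+(\omega)-G = G$, whence
\[
C_L(G,D)^\perp \;=\; C_L(G,D),
\]
which is precisely self-duality. There is no real obstacle here: the statement is simply the special case of Corollary~\ref{cor2} in which the divisor inequality $G\le D+(\omega)-G$ has been upgraded to an equality, and Theorem~\ref{thm:orthogonal} turns that equality of divisors into an equality of codes.

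As a sanity check, the condition $D+(\omega)=2G$ forces $\deg G = (\deg D + \deg(\omega))/2 = (n+2g-2)/2$, which is exactly the range in which Riemann--Roch gives $\dim L(G) = n/2$ (provided $G$ is non-special, which it will be in all intended applications). This is the required dimension for a self-dual code of length $n$, so the degree bookkeeping is consistent with the conclusion; but the actual proof is just the one-line substitution into Theorem~\ref{thm:orthogonal} above.
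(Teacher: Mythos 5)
Your proof is correct and is exactly the argument the paper leaves implicit: since the two hypotheses of Theorem~\ref{thm:orthogonal} are in force, the equality $D+(\omega)=2G$ rewrites $C_L(G,D)^\perp=C_L(D+(\omega)-G,D)$ as $C_L(G,D)$, which is self-duality. The Riemann--Roch dimension check is a reasonable sanity remark but is not needed; the paper treats the corollary as immediate from Theorem~\ref{thm:orthogonal} and Corollary~\ref{cor2}.
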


If $v_P((\omega)+D)$ is even for all $P\in {\rm supp}((\omega)+D)$, we can obtain self-dual algebraic geometry codes by taking $G=\frac{D+(\omega)}{2}$. Otherwise, we can construct self-dual codes using Theorem~\ref{thm2.1} as follows:

We will call a divisor $A$ even if $v_p(A)$ is even for all $P\in \mathbb P_F$. For a divisor $A$ we will denote by $\lfloor A \rfloor$ (respectively $\lceil A \rceil$) the largest (respecectively smallest) even divisor $B$ with $B\leq A$ (respectively $B\geq A$). Clearly $2A=\lfloor A \rfloor+\lceil A \rceil$.


\begin{theorem} \label{thm:codeconstr}
Let $n$ be an even integer, that is also a multiple of $4$ in case $q \equiv 3 \pmod 4$. Let $P_1, P_2, \ldots P_n$ be pairwise different rational places of $F/\mathbb F_q$, let $D=P_1+P_2+\ldots+P_n$ and let $\omega$ be a differential, such that 
\begin{enumerate}
\item $v_{P_i}(\omega)=-1$, for $i=1, 2,\ldots, n$,
\item ${\rm res}_{P_i}(\omega)= {\rm res}_{P_j}(\omega)$ for $1\leq i,j\leq n$.
\end{enumerate}
Then there exists a self-dual code of length $n$ and minimum distance $d$ satisfying
$$d \geq \frac{\deg(\lfloor D+(\omega)\rfloor)}{2}-\deg (\omega).$$
\end{theorem}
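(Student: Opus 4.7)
The plan is to construct a concrete self-orthogonal algebraic geometry code from the given data, then invoke Theorem~\ref{thm2.1} to extend it to a self-dual code, and finally bound the minimum distance through the dual description of Theorem~\ref{thm:orthogonal}.

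First I would set
$$ G := \tfrac{1}{2}\lfloor D+(\omega)\rfloor, $$
which is a well-defined divisor of $F/\mathbb F_q$ because $\lfloor D+(\omega)\rfloor$ is even by construction. The key preliminary observation is that, by hypothesis~(1), $v_{P_i}(D+(\omega)) = 1 + (-1) = 0$ for each $i$, so $P_i\notin \supp\lfloor D+(\omega)\rfloor$ and hence $\supp G \cap \supp D = \varnothing$. Thus both codes $C_L(G,D)$ and $C_L(D+(\omega)-G,D)$ are well-defined, and hypotheses~(1) and~(2) are exactly what Theorem~\ref{thm:orthogonal} requires.

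Next, since $2G = \lfloor D+(\omega)\rfloor \le D+(\omega)$, Corollary~\ref{cor2} applies and yields that $C := C_L(G,D)$ is self-orthogonal. The length $n$ satisfies condition $(\star)$ of Theorem~\ref{thm2.1} by hypothesis, so Theorem~\ref{thm2.1} produces a self-dual code $\widetilde C \subseteq \mathbb F_q^n$ with $C \subseteq \widetilde C$.

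Finally, to control the minimum distance, I would use that $\widetilde C$ is self-dual and $C\subseteq \widetilde C$, which gives
$$ \widetilde C = \widetilde C^{\bot} \subseteq C^{\bot} = C_L(D+(\omega)-G,D), $$
the last equality by Theorem~\ref{thm:orthogonal}. Every nonzero codeword of $\widetilde C$ is therefore a nonzero codeword of $C_L(D+(\omega)-G,D)$, whose minimum distance is at least
$$ n - \deg(D+(\omega)-G) = \deg G - \deg(\omega) = \frac{\deg(\lfloor D+(\omega)\rfloor)}{2} - \deg(\omega), $$
which is exactly the asserted bound.

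The proof is really an assembly of the results that precede the statement, so there is no substantial obstacle; the only point that one must not overlook is verifying $\supp G \cap \supp D = \varnothing$, which is where hypothesis~(1) on $\omega$ is used in an essential way to ensure that $D+(\omega)$ has valuation $0$ at every $P_i$, and hence so does $\lfloor D+(\omega)\rfloor$.
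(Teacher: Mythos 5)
Your proof is correct and follows essentially the same route as the paper: set $G = \tfrac{1}{2}\lfloor D+(\omega)\rfloor$, invoke Corollary~\ref{cor2} for self-orthogonality, extend via Theorem~\ref{thm2.1}, and bound the distance through $\widetilde C = \widetilde C^{\bot} \subseteq C^{\bot} = C_L(D+(\omega)-G,D)$. The one small addition you make — verifying that hypothesis~(1) forces $v_{P_i}(D+(\omega))=0$ and hence $\supp G \cap \supp D = \varnothing$, so the Goppa codes are actually well-defined — is a useful observation that the paper leaves implicit, but it does not change the substance of the argument.
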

\begin{proof}
Let $G=\frac{\lfloor (\omega)+D \rfloor}{2}$. Since $2G=\lfloor (\omega)+D \rfloor \leq (\omega)+D$, it follows from  Corollary~\ref{cor2} that the code $C_{L}(G,D)$ is self-orthogonal. By Theorem~\ref{thm:orthogonal} we have
$$  C_{  L}(G,D)^\perp=  C_{L}(D+(\omega)-G,D)=  C_{  L}(\frac{\lceil D+(\omega)\rceil}{2},D).$$
Hence for the minimum distance of $  C_{  L}(G,D)^\perp$ we obtain the estimate
\begin{eqnarray*}
d(  C_{  L}(G,D)^\perp)&=&d(  C_{  L}(\frac{\lceil D+(\omega)\rceil}{2},D))\geq n-\deg(\frac{\lceil D+(\omega)\rceil}{2})\\
&=&\deg(\frac{\lfloor D+(\omega)\rfloor}{2})-\deg (\omega)\\
&=&\frac{\deg(\lfloor D+(\omega)\rfloor)}{2}-\deg (\omega).
\end{eqnarray*}
By Theorem~\ref{thm2.1} we see that there is a self-dual code $\widetilde{C}$ with $C_{L}(G,D) \subseteq \widetilde{C} \subseteq  C_{  L}(G,D)^\perp$. From this inclusion we obtain for the minimum distance $d(\widetilde{C})$ of $\widetilde{C}$
$$d(\widetilde{C})\geq d(C_{  L}(G,D)^\perp)\geq \frac{\deg(\lfloor D+(\omega)\rfloor)}{2}-\deg (\omega).$$
\end{proof}

\section{Asymptotically good self-dual codes} \label{sec4}
The real strength of algebraic geometry codes becomes apparent when considering asymptotic questions, i.e., families of codes of increasing length. The length of an algebraic geometry code is limited by the number of rational places $N(F)$ of the function field $F$. Hence to consider codes of increasing length one is naturally led to work with function fields with many rational places, which will necessarily have large genera. Thus let us briefly recall the notion of a tower of function fields. 

A tower $\mathcal  F$ of function fields over $\mathbb F_q$ is an infinite sequence $\mathcal F=(F_0, F_1, F_2, \ldots)$ of function fields $F_i / \mathbb F_q$, with the following properties:
\begin{enumerate}
\item $F_0 \subseteq F_1 \subseteq F_2 \subseteq \ldots$.
\item The field $\mathbb F_q$ is the full constant field of $F_i$, for $i=0, 1, 2,\ldots$.
\item For each $i \geq 1$, the extension $F_i / F_{i-1}$ is finite and separable.
\item $g(F_i) \to \infty$ as $i \to \infty$.
\end{enumerate}

A tower $\mathcal  F=(F_0, F_1, F_2, \ldots)$ is called a Galois tower, if all extensions $F_i/F_0$ are Galois. For a Galois tower $  \mathcal F=(F_0, F_1, F_2, \ldots)$, a place $P\in \mathbb P_{F_0}$ and a place $Q\in \mathbb P_{F_i}$, we will denote by $e_i(P)$ the ramification index $e(Q|P)$ of $Q|P$. Note that since all extensions $F_i/F_0$ are Galois, $e_i(P)$ is well-defined; i.e., does not depend on the chosen place $Q$ of $F_i$ lying over $P$. 

We define the genus $\gamma(\mathcal  F/F_0)$ of $\mathcal  F$ over $F_0$
$$\gamma(\mathcal  F):=\lim_{i \to \infty} \frac{g(F_i)}{[F_i : F_0 ]}.$$
It can be shown, that this limit exists (it can be $\infty$).

A place $P$ of $F_0$ is said to be ramified in the tower $\mathcal  F=(F_0, F_1, F_2, \ldots)$, if the place $P$ is ramified in the extension $F_i / F_0$ for some $i\geq 1$. 
The set 
$$V(\mathcal  F / F_0):=\{ P \in \mathbb P({F_0}) | P \textrm{ is ramified in }   F \} $$
is called the ramification locus of $\mathcal  F$ over $F_0$. 

A rational place $P$ of $F_0$ is said to split completely in the tower $\mathcal  F=(F_0, F_1, F_2, \ldots)$, if the place $P$ splits completely in all extensions $F_i / F_0$. 

\begin{theorem} \label{constr} Suppose there exists a tower $\mathcal  F=(F_0, F_1, \ldots)$ of function fields $F_i / \mathbb F_q$  satisfying the following conditions:
\begin{enumerate}
\item The extension $F_i/F_0$ is Galois for every $i\geq 1$,
\item the ramification locus $V( \mathcal F/F_0)$ of the tower is finite. Moreover for any place $P\in V(\mathcal  F/F_0)$, we have $\lim_{i \to \infty} e_i(P)=\infty$, where $e_i(P)$ denotes the ramification index of the place $P$ in the extension $F_i/F_0$,
\item there exists a differential $\omega$ of $F_0$, such that:
\begin{itemize}
\item ${\rm supp}((\omega))=\{R_0, R_1, \ldots, R_k\} \cup \{S_0, S_1, \ldots, S_m\}\subseteq \mathbb P_{F_0}$,
\item $m>0$, the places $S_0, S_1, \ldots, S_m$ are rational and split completely in the tower, moreover we have $v_{S_j}(\omega)=-1$ and ${\rm res}_{S_j}(\omega)=1$ for $0 \leq j \leq m$,
\item the places $R_0, R_1, \ldots, R_k$ are ramified in the tower,
\end{itemize}
\item  
\begin{itemize} 
\item if $q \equiv  3 \pmod{4}$ then $4 | [F_r:F_0]\cdot m$ for some $r\geq 1$
\item if $q$ is even or $q \equiv 1 \pmod{4}$ then $2 | [F_r:F_0]\cdot m$ for some $r\geq 1$
\end{itemize}
\end{enumerate}
Then there exists a sequence $(C_i)_{i \geq 0}$ of self-dual codes over $\mathbb F_q$, such that
$$n(C_i) \to \infty \quad \text{and} \quad \liminf_{i \to \infty} \frac{d_i}{n_i} \geq \frac{1}{2}-\frac{\gamma(\mathcal  F/F_0)}{m},$$
where 
$$\gamma(\mathcal  F/F_0)=\lim_{i \to \infty} \frac{g(F_i)}{[F_i:F_0]}$$
denotes the genus of the tower $\mathcal  F$.
\end{theorem}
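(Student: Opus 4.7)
The plan is to construct, for each sufficiently large $i$, a self-orthogonal algebraic geometry code on $F_i$ whose self-dual extension via Theorem~\ref{thm2.1} has the required length, and then to extract asymptotics from the tower data using Theorem~\ref{thm:codeconstr}.

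First I would let $D_i$ be the sum of all rational places of $F_i$ lying above the splitting places $S_j$. Because each $S_j$ splits completely in the tower, $n_i := \deg D_i$ is a multiple of $[F_i:F_0]$; in particular $n_i$ is a multiple of $n_r = m[F_r:F_0]$, so condition~(4) ensures that $n_i$ satisfies the divisibility hypothesis~($\star$) of Theorem~\ref{thm2.1} for every $i \geq r$. For the differential I would take $\omega_i := {\rm Cotr}_{F_i/F_0}(\omega)$. Since each $S_j$ is unramified in the Galois extension $F_i/F_0$, the compatibility of cotrace with valuation and residue at unramified places gives $v_Q(\omega_i) = -1$ and ${\rm res}_Q(\omega_i) = 1$ at every place $Q$ of $F_i$ above some $S_j$. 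Thus $\omega_i$ satisfies the hypotheses of Theorem~\ref{thm:codeconstr}, which produces a self-dual code $\widetilde{C}_i$ of length $n_i$ with
$$ d(\widetilde{C}_i) \;\geq\; \tfrac{1}{2}\deg\bigl(\lfloor D_i + (\omega_i)\rfloor\bigr) - \deg(\omega_i). $$

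Next I would extract an asymptotic estimate. Let $N_i^{\rm odd}$ denote the number of places $P$ of $F_i$ at which $v_P(D_i + (\omega_i))$ is odd. By the definition of $\lfloor\cdot\rfloor$ one has $\deg(\lfloor D_i + (\omega_i)\rfloor) = n_i + (2g(F_i) - 2) - N_i^{\rm odd}$, and substitution into the bound above yields $d(\widetilde{C}_i) \geq n_i/2 - g(F_i) + 1 - N_i^{\rm odd}/2$. The crucial point is that at a place of $F_i$ above some $S_j$ the local valuation of $D_i+(\omega_i)$ equals $1 + (-1) = 0$, hence is even, so $N_i^{\rm odd}$ is concentrated on places of $F_i$ lying above the finite ramification locus $V(\mathcal{F}/F_0)$ (the $R_k$ are ramified and thus contained in $V(\mathcal{F}/F_0)$, and the support of the different ${\rm Diff}(F_i/F_0)$ lies over $V(\mathcal{F}/F_0)$).

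The main technical step, and the principal obstacle, is to show $N_i^{\rm odd} = o([F_i:F_0])$. This is where condition~(2) is decisive: for every $P \in V(\mathcal{F}/F_0)$ the number of places of $F_i$ above $P$ is at most $[F_i:F_0]/e_i(P)$, and summing this bound over the finite set $V(\mathcal{F}/F_0)$ together with $e_i(P) \to \infty$ gives $N_i^{\rm odd}/[F_i:F_0] \to 0$. Dividing the distance estimate by $n_i$ and invoking $g(F_i)/[F_i:F_0] \to \gamma(\mathcal{F}/F_0)$ together with $n_i/[F_i:F_0] = m$ produces the claimed $\liminf d(\widetilde{C}_i)/n_i \geq 1/2 - \gamma(\mathcal{F}/F_0)/m$, while $n_i \to \infty$ follows from $[F_i:F_0] \to \infty$. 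With the odd-place count controlled, the remaining steps are routine bookkeeping.
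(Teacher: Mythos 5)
Your proof follows the paper's argument essentially step for step: the same choice of $D_i={\rm Con}_{F_i/F_0}(S_0+\cdots+S_m)$ and $\omega_i={\rm Cotr}_{F_i/F_0}(\omega)$, the same verification of the hypotheses of Theorem~\ref{thm:codeconstr} using complete splitting of the $S_j$, the same reduction of condition~($\star$) to assumption~(4) by taking $i\geq r$, and the same asymptotic estimate obtained by observing that all odd-valuation contributions to $D_i+(\omega_i)$ lie over the finite ramification locus and invoking $e_i(P)\to\infty$. One small bookkeeping slip: in general $\deg(\lfloor A\rfloor)=\deg A-\sum_{P:\,v_P(A)\ {\rm odd}}\deg P$ rather than $\deg A - N_i^{\rm odd}$ when places of degree greater than $1$ occur over the ramification locus (the paper accordingly bounds $\sum_{Q\in T}\deg Q\leq\sum_{P\in V(\mathcal F/F_0)}([F_i:F_0]/e_i(P))\deg P$), but since this degree-weighted sum is still $o([F_i:F_0])$, your asymptotic conclusion is unaffected.
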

\begin{proof}
For $i\geq 0$ consider the differential $\omega_i={\rm Cotr}_{F_i/F_0}(\omega)$ of $F_i$. Let $D_i={\rm Con}_{F_i/F_0} (S_0+S_1+\ldots+S_m)$. Since for $0 \leq j \leq m$ the place $S_j$ splits completely in the tower and since $v_{S_j}(\omega)=-1$ and ${\rm res} _{S_j}(\omega)=1$, we have $v_{Q}(\omega_i)=-1$ and ${\rm res}_{Q}(\omega_i)=1$ for any $Q \in {\rm supp}(D_i)$. Without loss of generality we can assume that $i\geq r$. Since $|{\rm supp}D_i|=m\cdot [F_i:F_0]$ it follows hence by Theorem~\ref{thm:codeconstr} that there exists a self-dual code $C_i$ of length $n_i=m\cdot [F_i:F_0]$ and minimum distance $d_i$ satisfying
$$d_i \geq \frac{\deg(\lfloor D_i+(\omega_i)\rfloor)}{2}-\deg (\omega_i).$$
Next we want to estimate $\deg(\lfloor D_i+(\omega_i)\rfloor)$. Let $T:={\rm supp}(D_i+(\omega_i))$. Clearly we have
$$\deg(\lfloor D_i+(\omega_i)\rfloor)\geq \deg(D_i+(\omega_i))-\sum_{Q\in T} \deg Q.$$
Since $(\omega_i)={\rm Con}_{F_i/F_0}((\omega))+{\rm Diff}(F_i/F_0)$, it follows that every place in the support of $D_i+(\omega_i)$ lies over a place in the ramification locus $V(\mathcal  F/F_0)$ of $\mathcal  F$ (which is finite!). Hence
$$\sum_{Q\in T} \deg Q \leq \sum_{P\in V(\mathcal  F/F_0)} \frac{[F_i:F_0]}{e_i(P)}\deg P.$$
Hence
\begin{eqnarray*}
d_i &\geq& \frac{\deg(\lfloor D_i+(\omega_i)\rfloor)}{2}-\deg (\omega_i) \\
&\geq& \frac{\deg(D_i+(\omega_i))-\sum_{Q\in T} \deg Q}{2} - \deg (\omega_i) \\
&\geq& \Bigl( \deg(D_i)-\deg(\omega_i)-\sum_{P\in V(\mathcal  F/F_0)} \frac{[F_i:F_0]}{e_i(P)}\deg P\Bigr) /2.
\end{eqnarray*}
Dividing by $n_i=m\cdot [F_i:F_0]=\deg(D_i)$ and using $\deg(\omega_i)=2g(F_i)-2$, we obtain
$$\frac{d_i}{n_i}\geq \frac{1}{2}-\frac{g(F_i)}{m\cdot [F_i:F_0]}+\frac{1}{m\cdot [F_i:F_0]}-\Bigl(\sum_{P\in V( \mathcal F/F_0)} \frac{1}{e_i(P)}\deg P\Bigr)/(2m).$$
Letting $i\to \infty$ and noting that for all $P\in V( \mathcal F/F_0)$ we have $\lim_{i\to \infty} e_i(P)=\infty$,  we obtain the desired result.
\end{proof}

Galois towers over all non-prime finite fields were constructed in \cite{BBGS}. In particular in \cite[Theorem 1.1]{BBGS} it is shown that 
for $\ell$ a prime power, $q=\ell^r$ with $r>1$ odd there exists  a tower $\mathcal  F=(F_0, F_1, F_2, \ldots)$ of functions fields over $\mathbb F_q$, satisfying the conditions in Theorem~\ref{constr} with $m=1$ and 
\begin{equation}\label{ineqodd}
\gamma(\mathcal F/F_0)\geq \frac{1}{2}\Bigl(\frac{1}{\ell^{(r-1)/2}-1}+\frac{1}{\ell^{(r+1)/2}-1}\Bigr)
\end{equation}
Note that in the corresponding tower the field $F_0=\mathbb F_q(z)$ is a rational function field and we take 
$$\omega:=\frac{{\rm d}z}{z-1}.$$
Then all conditions of Theorem~\ref{constr} are easily verified.

\begin{remark} \label{remarklabel}
Let $q$ be a prime power, that is not a prime, with $q\geq 64$ and $q\neq 125$. Let $\delta_0$ be such that $1-H_q(\delta_0)=1/2$. Then there is a prime power $\ell$ and an integer $r>1$ such that $q=\ell^r$ and 
$$\delta_0<\frac{1}{2}-\frac{1}{2}\Bigl(\frac{1}{\ell^{\lceil r/2\rceil}-1}+\frac{1}{\ell^{\lfloor r/2\rfloor}-1}\Bigr)=:\delta_1$$
\end{remark}
\begin{proof}[Proof of Remark \ref{remarklabel}]
Since $1-H_q(\delta)$ is a strictly decreasing function, it is sufficient to show that 
\begin{equation} \label{ineq2}
1-H_q(\delta_1)<1/2.
\end{equation} 
Let 
$$\epsilon=\frac{1}{2}\Bigl(\frac{1}{\ell^{\lceil r/2\rceil}-1}+\frac{1}{\ell^{\lfloor r/2\rfloor}-1}\Bigr).$$
Using the Taylor series expansion of $\log_q(1+x)$, we see that 
$$\log_q(q-1)>1-\frac{1}{\ln(q)}\cdot\frac{1}{q-1},\ \log_q(\frac{1}{2}-\epsilon)< \frac{1}{\ln(q)}\cdot\bigl(-\frac{1}{2}-\epsilon\bigr),
\ \log_q\bigl(1-(\frac{1}{2}-\epsilon)\bigr)<\frac{1}{\ln(q)}\cdot\bigl(-\frac{1}{2}+\epsilon\bigr).$$
Hence to show Inequality~\eqref{ineq2} it is sufficient to show that 
$$\bigl(\frac{1}{2}-\epsilon\bigr)\Bigl[1-\frac{1}{\ln(q)}\cdot\frac{1}{q-1}+\frac{1}{\ln(q)}\cdot\bigl(1+2\epsilon\bigr)\Bigr]>\frac{1}{2}.$$
Noting that $1/\epsilon$ is the harmonic mean of $\ell^{\lceil r/2\rceil}-1$ and $\ell^{\lfloor r/2\rfloor}-1$ and therefore 
$2\epsilon\geq 1/(q-1)$, it is enough to show that 
$$\frac{1}{\epsilon}>2+2\ln(q).$$
For the same reason, $1/\epsilon \geq \ell^{\lfloor r/2\rfloor}-1$, so it suffices to show $\ell^{\lfloor r/2\rfloor}>3+2\ln(l^r)$. This inequality is easily checked for $\ell\geq 23$ or $r>7$. Direct calculation in the finitely many remaining cases shows that Inequality~\eqref{ineq2} holds except for $\ell^r \leq 49$, $(l,r)=(5,3)$ and $(l,r)=(4,3)$ (it does however hold for $(l,r)=(2,6)$). The result follows.
\end{proof}
Inequalities \eqref{transeq} and \eqref{ineqodd} and Remark~\ref{remarklabel} together yield the following result over all nonprime finite fields:
\begin{theorem} 
Let $q=\ell^r$ with $r>1$. There exists a sequence $(C_i)_{i\geq0}$ of self-dual codes over $\mathbb F_q$ having parameters $[n_i, n_i/2, d_i]$ with $n_i \to \infty$ and
$$
\liminf_{i \to \infty} d_i/n_i \geq \frac{1}{2}-\frac{1}{2}\Bigl(\frac{1}{\ell^{\lceil r/2\rceil}-1}+\frac{1}{\ell^{\lfloor r/2\rfloor}-1}\Bigr).
$$
Hence for all nonprime $q$ with $q\geq 64$ except $q=125$ there are self-dual codes better than the Gilbert--Varshamov bound. 
\end{theorem}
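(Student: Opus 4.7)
My plan is to split into two cases depending on the parity of $r$, construct in each case a family of self-dual algebraic geometry codes, and then invoke Remark~\ref{remarklabel} to compare with the Gilbert--Varshamov bound.

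For $r = 2s$ even, I observe that $q = (\ell^s)^2$ is a square and apply \eqref{transeq} from \cite{trans} with $\ell$ replaced by $\ell^s$. This directly produces a sequence of self-dual codes over $\mathbb F_q$ with $\liminf_{i\to\infty} d_i/n_i \geq \tfrac{1}{2} - \tfrac{1}{\ell^s - 1}$, and since $\lceil r/2\rceil = \lfloor r/2\rfloor = s$ this matches the claimed bound exactly.

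For $r > 1$ odd, I apply Theorem~\ref{constr} to the Galois tower constructed in \cite[Theorem 1.1]{BBGS} over $\mathbb F_q$, with the data described at the end of this section: $F_0 = \mathbb F_q(z)$, $\omega = \mathrm{d}z/(z-1)$, $m = 1$, and the completely split rational place $S_0$ coming from $z = 0$. Checking the hypotheses of Theorem~\ref{constr} (finiteness of the ramification locus with $e_i(P) \to \infty$, complete splitting of $S_0$, and the divisibility condition on $[F_r:F_0]$) then gives a sequence of self-dual codes with
$$\liminf_{i\to\infty} d_i/n_i \geq \frac{1}{2} - \gamma(\mathcal F/F_0),$$
and combining with the genus bound \eqref{ineqodd} produces the desired inequality for odd $r$.

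The step that needs the most care is the verification of condition (4) of Theorem~\ref{constr} when $q \equiv 3 \pmod 4$, where $4 \mid [F_r:F_0]$ has to be extracted at some level of the BBGS tower; in the other residue classes $2$-divisibility is automatic from the construction. Once both parity cases yield $\liminf_{i\to\infty} d_i/n_i \geq \delta_1$ with
$$\delta_1 = \frac{1}{2} - \frac{1}{2}\Bigl(\frac{1}{\ell^{\lceil r/2\rceil}-1} + \frac{1}{\ell^{\lfloor r/2\rfloor}-1}\Bigr),$$
the final comparison with the Gilbert--Varshamov bound is exactly the content of Remark~\ref{remarklabel}, which asserts $\delta_1 > \delta_0$ for every nonprime $q \geq 64$ with $q \neq 125$. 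This assembles into the stated theorem.
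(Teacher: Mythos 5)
Your proof follows exactly the paper's route: \eqref{transeq} handles the square case (with $\ell$ replaced by $\ell^{r/2}$), Theorem~\ref{constr} applied to the BBGS tower together with \eqref{ineqodd} handles odd $r>1$, and Remark~\ref{remarklabel} supplies the comparison with the Gilbert--Varshamov bound. One concrete slip in the odd-$r$ case: for $\omega = \mathrm{d}z/(z-1)$ on $F_0=\mathbb F_q(z)$ the divisor is $(\omega) = -(z=1)-(z=\infty)$, so the completely split place required to have $v_{S}(\omega)=-1$ and ${\rm res}_{S}(\omega)=1$ is $(z=1)$, not $(z=0)$; indeed $v_{(z=0)}(\omega)=0$, and $(z=0)$ is one of the ramified places of the BBGS tower, so it cannot play the role of $S$. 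This is a minor correction and does not change the structure of your argument.
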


\begin{remark} This result was obtained for quadratic finite fields in \cite{trans}. Although over the field $\mathbb F_{49}$ the Tsfasman--Vladut--Zink bound is better than the Gilbert--Varshamov bound on a non-empty interval, this interval does not include codes with $R=1/2$ (for $q=49$). Hence our proof works only for $q \geq 64$. 
\end{remark}

\vspace{3ex}

\noindent
Alp Bassa\\
Bo\u{g}azi\c{c}i University,\\
Faculty of Arts and Sciences,\\
Department of Mathematics,\\
34342 Bebek, \.{I}stanbul, Turkey\\
alp.bassa@boun.edu.tr\\

\noindent Henning Stichtenoth\\
Sabanc{\i} University, MDBF\\
{\rm 34956} Tuzla, \.Istanbul, Turkey\\
henning@sabanciuniv.edu\\
\end{document}